\let\csname equation*\endcsname\relax
\let\csname endequation*\endcsname\relax
\newtheorem{Thm}{Theorem}[section]
\newtheorem{Cor}[Thm]{Corollary}
\newtheorem*{Thm2.1.3}{{Theorem 2.1.3 of \cite{CoddingtonTheory}}}
\newtheorem*{ThmI.5.3}{{Theorem I.5.3 of \cite{Hale2009Ordinary}}}
\theoremstyle{definition}
\theoremstyle{remark}
\newcommand{\man}[1]{\ensuremath{\mathcal{#1}}} 
\newcommand{\bound}[2][]{\ensuremath{\partial{#1}\man{#2}}} 
\newcommand{\EmptySet}{\ensuremath{\varnothing}}
\newcommand{\R}{\ensuremath{\mathbb{R}}}
\newcommand{\N}{\ensuremath{\mathbb{N}}}
\begin{document}

  \title{The Endpoint Theorem{$^\text{\dag}$}}
  \author{Susan M Scott$^{1,2}$ and Ben E Whale$^{1}$}

  \address{$^1$ 
    Centre for Gravitational Astrophysics,
    Department of Quantum Science, 
    Research School of Physics,
    College of Science, 
    The Australian National University} 
  \address{$^2$
    Australian Research Council Centre of Excellence 
      for Gravitational Wave Discovery (OzGrav)}

  \ead{\mailto{susan.scott@anu.edu.au}
  \mailto{ben@benwhale.com}}

  \begin{abstract}
    The Endpoint Theorem links the existence of a sequence (curve), without
    accumulation points, in a manifold to the existence of an 
    open embedding of
    that manifold so that the image of the
    given sequence (curve) has a unique endpoint. 
    It plays a fundamental role in the theory
    of the Abstract Boundary as it implies that
    there is always an Abstract Boundary boundary point to represent the
    endpoint of such sequences and curves.
    The Endpoint Theorem will be of interest to researchers
    analysing specific spacetimes as it shows how to construct
    a chart in the original manifold which contains the sequence (curve).
    In particular, it has application to the 
    study of singularities predicted by the singularity theorems.
  \end{abstract}
  
  \ams{58A05, 57N35, 53Z05}
  \pacs{04.20.Dw}
  \submitto{\CQG}

  \noindent{\it Keywords\/}: 
    Compact, Sequence, Accumulation Point, Endpoint, Embedding, Envelopment,
    Abstract Boundary, Singularity Theorems

  \footnotetext{This paper is dedicated to the memory of Christopher J.S. Clarke.}

  \maketitle

\section{Introduction}

  This paper addresses a technical question about the ability to build
  an
  open embedding, $\phi: \man{M}\to\man{M}_\phi$, 
  for a manifold $\man{M}$, in which the image of 
  a given accumulation point free sequence $(x_i)_{i\in\N}\subset\man{M}$,
  has a unique limit point in the larger manifold. This question has significant
  physical interest if one wishes, as we do, to understand the physical consequences
  of the Penrose-Hawking singularity theorems using only the context of those
  theorems. It is apparent that both the motivation for doing this and the ``rules of the game'' are no 
  longer well understood, which is disappointing, since a great deal of effort was spent
  in the 70s and 80s attempting to understand the implications of
  the singularity theorems in this way. As an example, 
  Christopher Clarke who worked on the original version of the Endpoint Theorem
  with Scott, produced a book \cite{Clarke1993}, and numerous other publications, for example
  \cite{SlupinskiClarke1980, TiplerClarkeEllis1980, Clarke1979, 1978CMaPh..58..291C,1975GReGr...6...35C,
  MR0371347, ClarkeSchmidt1977, Clarke1983, FamaClarke1998, ClarkeKrolak1985},
  which provide ample examples of what we mean by ``the context for the singularity
  theorems''. 
  Nevertheless, for the benefit of the reader, we provide here
  an explanation of these topics relevant for this paper. 
  The reader familiar with applications of
  differential topology to relativity (e.g.\ \cite{Penrose1972}), 
  or more broadly with
  global Lorentzian geometry (e.g.\ \cite{BeemEhrlichEasley1996}),
  may wish to skip directly to Section 
  \ref{sec_theorem}.

  We begin with a discussion of the 
  Penrose-Hawking singularity theorems. For more
  detail refer to \cite{senovilla1998singularity, senovilla20151965}
  for general overviews, \cite{Penrose1972, HawkingEllis1973} for self-contained
  presentations by Penrose, and Hawking and Ellis, respectively, or \cite[Chapter 12]{BeemEhrlichEasley1996} for
  a more modern, mathematically focussed, presentation.
  Theorem 12.43 of \cite{BeemEhrlichEasley1996} is an example of
  a modern singularity theorem that aligns with the spirit of the Penrose-Hawking 
  singularity theorems:
  \begin{Thm}[{\cite[Theorem 12.43]{BeemEhrlichEasley1996}}]
    \label{theorem_BEE}
    No Lorentzian manifold $(\man{M},g)$ of dimension greater than $2$ can satisfy
    all of the following three requirements together:
    \begin{enumerate}
      \item The manifold $(\man{M},g)$ has no closed timelike curves.
      \item There exists a future trapped or past trapped set in $(\man{M},g)$.
      \item Every inextendible, nonspacelike geodesic in $(\man{M},g)$ contains
        a pair of conjugate points.
    \end{enumerate}
  \end{Thm}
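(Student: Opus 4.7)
The plan is to argue by contradiction: suppose a Lorentzian manifold $(\man{M},g)$ with $\dim\man{M}>2$ satisfies all three conditions simultaneously, and derive a contradiction. By reversing the time orientation if necessary, we may assume that the trapped set $S$ from condition (ii) is future trapped, so that the future horismos $E^+(S):=J^+(S)\setminus I^+(S)$ is a non-empty, compact, achronal subset of $\man{M}$.

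First I would establish the classical structure of achronal boundaries: away from $S$, every point of $E^+(S)$ lies on a past-directed null geodesic which remains in $E^+(S)$ until it either reaches $S$ or becomes past-inextendible. This is a standard limit-curve construction; the role of the chronology condition (i) here is to guarantee that the limit geodesics are genuine null arcs rather than degenerating into closed causal loops in the compact region.

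Next I would apply condition (iii) to these generators. Each such null geodesic, once extended to be inextendible, must contain a pair of conjugate points by (iii). The focal-point theorem for null geodesics then implies that past its first conjugate point such a geodesic enters the chronological future $I^+(S)$ and so leaves $E^+(S)$. Upper semicontinuity of the conjugate parameter, together with compactness of $S$, should promote this to a uniform bound: every generator meets $E^+(S)$ in a compact segment of bounded affine length.

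The contradiction would then be extracted by combining compactness of $E^+(S)$ with this uniform affine-length bound. Intuitively, a continuous ``past-endpoint'' map on $E^+(S)\setminus S$ valued in a product of the form $\partial S\times[0,L]$ would embed $E^+(S)$ into a non-compact space in a way incompatible with its compactness. I expect the main obstacle to be precisely this last step: under the weak causality assumption (i) alone, establishing continuity and injectivity of such an endpoint map is delicate, since most textbook Penrose-style arguments exploit strong causality or global hyperbolicity. I would anticipate either first upgrading (i) to strong causality on a neighbourhood of the compact set $E^+(S)$ using the trapped-set hypothesis, or else proceeding by a direct limit-curve argument inside $E^+(S)$ to produce a future-inextendible causal curve whose existence would contradict the conjugate-point step.
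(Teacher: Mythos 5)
The paper does not prove this statement: it is quoted verbatim from Beem--Ehrlich--Easley, and the text only offers a one-sentence gloss of the standard argument, namely that conditions (i) and (ii) (together with the strong-causality lemma that itself uses (iii)) produce an \emph{inextendible causal geodesic with no pair of conjugate points}, which then contradicts (iii) directly. Your sketch takes a genuinely different route and, more importantly, has a gap at exactly the point you flag.

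The textbook proof does not try to extract a contradiction from the internal structure of $E^+(S)$. Instead it proceeds roughly as follows: first, chronology plus the hypothesis that every inextendible null geodesic has a conjugate pair is shown to force strong causality; second, strong causality plus the compactness of $E^+(S)$ (the content of ``future trapped'') yields a future-inextendible timelike curve in $\mathrm{int}\,D^+(E^+(S))$; third, a companion past-inextendible curve is produced and a sequence of \emph{maximal} causal geodesics joining points far along these two curves is taken; fourth, a limit-curve argument produces an inextendible causal geodesic along which no two points are conjugate. That geodesic violates (iii), and the contradiction is immediate. Note in particular that the strong-causality upgrade comes from (i) together with (iii), not from the trapped-set hypothesis as you suggest.

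Your proposed mechanism for the contradiction does not work as written. A map $E^+(S)\setminus S \to \partial S\times[0,L]$ recording the past endpoint and affine parameter of a generator lands in a \emph{compact} space when $S$ is compact, so there is no tension with the compactness of $E^+(S)$; this is precisely why the Penrose argument that your sketch resembles needs a \emph{non-compact Cauchy surface} to close the loop, and global hyperbolicity is not available here. Relatedly, condition (iii) is a statement about \emph{inextendible} causal geodesics and does not by itself force the conjugate pair to occur within the (generally proper) null segment lying in $E^+(S)$, so the claimed uniform affine-length bound on generators needs substantially more input than upper semicontinuity of the conjugate parameter. Your instinct that strong causality is the missing ingredient is correct, but it must be derived from (i) and (iii), and once you have it the efficient path is the maximal-geodesic limit-curve construction sketched above, not an endpoint-map argument on $E^+(S)$.
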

  The proof boils down to showing that the first and second conditions imply that
  there exists a geodesic without a pair of conjugate points. The third
  condition then gives a contradiction. 

  When one also assumes the weak energy condition, the Raychaudhuri 
  equation can be used to show that there exists an inextendible, incomplete,
  causal geodesic. If the manifold is assumed to be maximally extended,
  then one can claim that the inextendible, incomplete, causal geodesic
  ``approaches a singularity''. We write ``approaches a singularity'' in quotes
  because the predicted causal geodesic has no accumulation points.
  It ``approaches'' no points in the manifold. 
  Moreover, beyond defining a singularity to be an incomplete, inextendible
  curve in a maximally extended manifold, it is not clear exactly what
  ``singularity'' means. The singularity theorems do not predict any
  kind of physical experience for an observer on the incomplete, inextendible
  curve.
  Even today, more than 50 years after Penrose's paper \cite{penrose1965gravitational}
  (the basis for his receipt of the 2020 Nobel Prize in Physics), there
  is no accepted definition of what a singularity is \emph{in this context}.
  There are accepted definitions within classes of Lorentzian manifolds,
  for example certain parameter dependent solutions of Einstein's equations,
  but no accepted definition for the very general Lorentzian manifolds to which
  the singularity theorems apply.
  Interpretation and resolution of the Gordian knot of a phrase, ``approaches a singularity'', sums up the
  goal of the program of research in which this paper lies.

  Theorem \ref{theorem_BEE}
  is beautifully simple. It assumes very little, and only
  assumes things that have strong physical
  justifications \cite[Chapter 8]{HawkingEllis1973}. Yet, with the 
  additions
  of the weak energy condition and the assumption of maximal extension,
  it manages to show that the manifold is causal geodesically incomplete.
  This then has the interpretation that it is possible for a freely falling
  observer to exist for only a finite amount of time; a staggeringly profound
  statement about cosmology. 

  At the time when Penrose published the
  first singularity theorem \cite{penrose1965gravitational}, it was important that the
  theorem was very general. Two years earlier Lifshitz and Khalatnikov
  \cite{lifshitz1963investigations}
  had claimed that, ``An attempt is made to provide an answer to one of the
  principal questions of modern cosmology: ‘does the general solution of the
  gravitational equations have a singularity?’ The authors give a negative
  answer to this question.'' Because Penrose's result was very general,
  it demonstrated that this particular claim of Lifshitz and Khalatnikov
  could not be true. Six years later Belinskii, Lifshitz and Khalatnikov
  published their famous paper claiming to have derived the properties
  of generic cosmological singularities \cite{0038-5670-13-6-R04}.
  This paper
  birthed the BKL conjecture.
  It was the generality of Penrose's result that made it physically
  relevant. It is this generality that we seek to preserve in our study
  of gravitational singularities.

  Some researchers assume that it is necessary to involve Einstein's field
  equations to study what might happen to an observer along an incomplete,
  inextendible geodesic.
  They wish to phrase
  Einstein's equations as an initial value (or initial boundary value)
  problem and study the evolution of curvature via application of PDE 
  techniques.
  This assumes that the manifold is globally hyperbolic. Often
  a global splitting of the manifold, e.g.\ into a $3+1$ form, is also used.
  We see this as an interesting subcase of the full problem. 

  Within the context of PDE techniques, 
  ``approaches a singularity'' has a clear
  meaning and interpretation.
  For example,
  given a $3+1$ decomposition of the manifold, one has a canonical set
  of charts within which to perform computations. The properties and relations
  of these charts are known.
  These charts play a special role
  in supporting the analysis of the manifold. 
  In particular, points on the boundary of the images of the charts in Euclidean
  space are taken to play the role of the location of any singularity, 
  points at infinity, or points through which the manifold may be extended.
  We will call these boundary points.
  Thus the canonical charts define what ``approach'' means. 
  The curvature, or tensors related to the connection, can also be directly studied
  in the chart and ``on approach'' to boundary points. 
  The result is
  that one can determine properties of the singularity.

  In the more general context of the singularity theorems the manifold cannot be
  assumed to be globally hyperbolic. Any $3+1$ decomposition can only be assumed
  to be a local decomposition with respect to a particular chart. There is
  no reason why one can assume that some nice initial boundary value problem
  for Einstein's field equations can be given. There is no clear meaning
  nor interpretation for what ``approaches a singularity'' means.
  In particular, the lack of global charts means that one must take additional
  care in the use of chart based techniques. There is no {\em a priori} reason
  to select one chart's boundary points over another. 

  The inappropriateness of assumptions related to the application of PDE techniques
  in our work means that we must necessarily address questions that
  those assumptions usually render solved. Thus we must find other ways of
  solving these questions. Theorem \ref{Thm:Endpoint Theorem} is an example of
  this. In a globally hyperbolic manifold every sequence without accumulation points,
  in the manifold,
  has a set of boundary points to which it limits. This happens because of 
  global hyperbolicity. Theorem \ref{Thm:Endpoint Theorem} shows that the same
  is true for any manifold. For researchers unaware of the field of 
  differential topology in relativity, a result 
  like Theorem \ref{Thm:Endpoint Theorem}
  is likely to seem esoteric and unnecessary. We hope that the literature
  that we cite here can be used, by academics for which our result is
  esoteric, as a way to gain more understanding of other approaches to the
  open problems of general relativity.

  The field of boundary constructions in general relativity consists of
  attempts to build mathematical structures via which the
  phrase ``approaches a singularity'' can find valid, chart independent,
  interpretation. Even though a Lorentzian manifold has a metric, that metric
  does not induce a distance and thus the na\"{i}ve approach of using a Cauchy 
  completion fails; which distance on the manifold should you choose
  and why should you choose that distance? Sormani and Vega have 
  recently published a paper discussing this issue \cite{sormani2016null}.

  Modern examples of boundary constructions can be found in
  \cite{sormani2016null, flores2011final, garcia2003causal, ScottSzekeres1994, Whale2014Chart}.
  At a minimum, a boundary construction should produce a topological
  space into which the manifold $\man{M}$ can be embedded. The topological space
  should involve as few choices as possible and should be geometrically defined.
  Points in the topological space that are not in the image of $\man{M}$ can be
  considered as boundary points. Boundary points
  make up the boundary of $\man{M}$. Ideally such points should be classifiable
  as a singularity, points at infinity, regular points, and so on.
  A boundary construction is a generalisation of Penrose's conformal
  compactification.
  The modern boundaries are subtle and nuanced constructions. They
  can be categorised by the type of structure on $\man{M}$ that they use to
  perform the construction of the boundary.
  The $c$-boundary \cite{flores2011final} uses past and future sets.
  The isocausal boundary \cite{garcia2003causal} uses causality respecting
  embeddings. The null distance \cite{sormani2016null} selects a time function
  which induces a distance and then builds the Cauchy completion.
  The boundary given in \cite{Whale2014Chart} allows for any subsets of the 
  atlas satisfying a particular compatibility condition.
  The Abstract Boundary \cite{ScottSzekeres1994} uses the set of all
  open embeddings of the manifold. A good review of the field of boundary constructions,
  circa 2002, can be found in \cite{Ashley2002a}.

  The Abstract Boundary
  takes its inspiration from 
  the practice of studying global behaviour via charts.
  Examples of chart based global analysis
  can be found in every textbook on general relativity; our favourites are
  \cite{HawkingEllis1973, citeulike:553445, plebanski2006introduction}.
  The paper \cite{beyer2014exact} gives an example where a $3+1$ decomposition
  is used, and the two papers \cite{ScottSzekeres1986a, ScottSzekeres1986b} provide
  an example where the $3+1$ decomposition is not used.
  In all three papers \cite{beyer2014exact, ScottSzekeres1986a, ScottSzekeres1986b}
  different charts are constructed based on an analysis of curvature in order
  to ensure that the boundary points of the new chart somehow provide a
  better representation of the limiting behaviour of tensors related to the
  curvature. 

  The construction of the maximal
  analytic extension of the Kerr solution through the ring singularity
  \cite[Section 5.6]{HawkingEllis1973} provides another example.
  As presented in Boyer and Lindquist coordinates
  $(r, \theta, \phi, t)$, the set of points
  $\{(0, \theta, \phi, t):\theta\in(0,\pi),\,\phi\in(0,2\pi),\,t\in\R\}$
  all represent the same chart boundary point; the origin of $\R^4$.
  Analysis of the behaviour of $R_{abcd}R^{abcd}$ shows that
  this scalar has path dependent limits to the set
  $\{(0, \theta, \phi, t):\theta\in(0,\pi),\,\phi\in(0,2\pi),\,t\in\R\}$.
  Changing to Kerr-Schild coordinates resolves the path
  dependent limits and reveals a ring singularity as well as points
  towards which the metric has regular limits. The ring singularity
  and the regular points are still chart boundary points. They are not in
  the image of the chart; they are on the topological boundary of the
  image of the chart in Euclidean space. One can analytically extend through the
  regular points by constructing a new chart. The chart boundary points
  defining the ring singularity remain chart boundary points. 

  In \cite{Whale2014Chart} it is shown that any nice enough
  chart with boundary points
  induces an open embedding and vice versa. In this sense the
  Abstract Boundary uses the traditional chart based method of
  analysing global structure to build a non-chart dependent
  way of interpreting the word ``approach'' in the phrase
  ``approaches a singularity''.

  Theorem \ref{Thm:Endpoint Theorem} implies that the topological space
  generated by the Abstract Boundary \cite{Whale2014Chart, barry2011attached}
  is actually a compactification of $\man{M}$.
  Without this result there could exist sequences without accumulation points,
  or equivalently incomplete, inextendible curves, which approach no boundary point.
  If a boundary construction does not have enough boundary points then
  it might be possible that ``approach to a singularity'' could have no meaning
  for that boundary construction. This would demonstrate that more data than 
  assumed in the
  singularity theorems is needed to understand their physical implications.
  Thus Theorem \ref{Thm:Endpoint Theorem} is a foundational part of
  the Abstract Boundary and absolutely necessary for its relevance within the
  context of the singularity theorems.

  In the Kerr example the calculations using
  $R_{abcd}R^{abcd}$ are geometric but the claim that the singularity is
  a ring should really be worded, ``with respect to Kerr-Schild coordinates
  the chart boundary points at which $R_{acbd}R^{abcd}$ diverges
  is represented as a ring''. There is an example \cite{beyer2014exact}
  of a manifold in which there are two tensors such that there is
  no chart which resolves the path dependent limits of each tensor at the same
  time. This shows that the ring singularity of the Kerr solution
  is specific to $R_{abcd}R^{abcd}$ and the chosen chart. The ring singularity
  is not a geometric quantity; it is a representation of the values of the 
  path dependent limits of a geometric quantity in a particular chart.
  Because of the manner in which the Abstract Boundary constructs
  the larger topological space, every chart boundary point corresponds to a point in the topological space.
  This provides the statement ``The Kerr black hole contains a ring
  singularity'' with a non-chart dependent interpretation.
  It also, for example, can be used to give geometric interpretation
  to important chart dependent calculations like those in 
  \cite{klainerman2010breakdown}.

  The Abstract Boundary is the only boundary construction with a geometric
  definition of a singularity that does not rely on the introduction of
  curvature quantities \cite[Theorems 1.2 and 1.3]{whale2015generalizations}.
  Thus it aligns with the context of the Penrose-Hawking singularity theorems.
  The results contained in \cite{whale2015generalizations} rely on Theorem 
  \ref{Thm:Endpoint Theorem} to prove that every incomplete, inextendible
  curve is represented by a boundary point that can be interpreted
  as an essential singularity.
  Theorems 1.2 and 1.3 of \cite{whale2015generalizations}
  show that incomplete, inextendible curves, within very general
  classes of curves, in very general spacetimes, must end at essential Abstract
  Boundary singularities; points through which the metric cannot be extended.
  It is known that Abstract Boundary singularities satisfy generic stability
  properties \cite{Ashley2002b, FamaScott1994, FamaClarke1998}. Thus they
  are geometrically defined, physically reasonable, and closely tied to
  the incomplete, inextendible curves predicted by the singularity theorems.
  Theorems 1.2 and 1.3 of \cite{whale2015generalizations}
  therefore provide an answer to the implicit ``location'' 
  aspect of the word ``approach'' for exactly the type of singularity predicted
  by the singularity theorems.
  These theorems are vital steps towards solving the other main
  problem in the completion of the singularity theorems: “What are their
  physical properties?” The Endpoint Theorem has, and will continue to
  facilitate further research to probe the nature of these singularities,
  hopefully one day leading to the full completion of the singularity theorems;
  that is, a statement regarding the physical behaviour of the predicted
  incomplete, inextendible curves, without assumptions beyond those used in the
  singularity theorems themselves. See \cite{AshleyScott2003} for further details
  about the intended program.

\section{The Endpoint Theorem}\label{sec_theorem}
  
  Because the theorem below, in effect, attaches endpoints to
  sequences and curves we have dubbed it ``The Endpoint Theorem''.

  \begin{Thm}[The Endpoint Theorem]\label{Thm:Endpoint Theorem}
    Let $\man{M}$ and $\man{M}_\phi$ be smooth,
    connected, Hausdorff, paracompact manifolds of
    dimension $n$. If $(x_i)_{i\in\mathbb{N}}$ is a sequence of points in
    $\man{M}$ without an accumulation point, then there exists an open
    embedding $\phi:\man{M}\to\man{M}_\phi$, such that
    $\partial\phi(\man{M})$ is diffeomorphic
    to the $n-1$ dimensional unit ball  and the
    sequence $(\phi(x_i))_{i\in\mathbb{N}}$ converges to some
    $y\in\bound{\phi(\man{M})}$.
  \end{Thm}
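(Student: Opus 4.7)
The plan is to thread a smooth, properly embedded half-line $\gamma$ through (a subsequence of) the given points, identify an open tubular neighborhood of $\gamma$ with $(-1,\infty)\times B$, where $B\subset\R^{n-1}$ denotes the open unit ball, and then glue in a copy of $\{1\}\times B$ ``at infinity'' to obtain $\man{M}_\phi$.

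The main technical step is the construction of $\gamma$. Since $(x_i)$ has no accumulation point, the set $\{x_i\}$ is closed and discrete, so after renaming I may assume the $x_i$ are distinct. Paracompactness yields a compact exhaustion $K_1\subset\mathrm{int}(K_2)\subset K_2\subset\cdots$ with $\bigcup_j K_j=\man{M}$; each $K_j$ contains only finitely many $x_i$, so I can pass to a subsequence satisfying $x_i\notin K_i$. I would then inductively build smooth embedded arcs $\beta_i\colon[i,i+1]\to\man{M}$ with $\beta_i(i)=x_i$, $\beta_i(i+1)=x_{i+1}$, derivatives matched to the previously built arc at the node $i$, and with interior contained in $\man{M}\setminus\bigl(K_{i-1}\cup\bigcup_{k<i}\beta_k([k,k+\tfrac12])\bigr)$. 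For $n\geq 2$ the existence of such $\beta_i$ follows from the connectedness of $\man{M}$ and a general-position perturbation, off the forbidden closed set, of any chosen smooth joining path, using a partition of unity and the codimension count for $1$-complexes inside an $n$-manifold; the case $n=1$ is immediate since then $\man{M}\cong\R$ (the hypothesis fails for $S^1$). The concatenation $\gamma\colon[0,\infty)\to\man{M}$ is a smooth embedding, and because $\gamma([i+1,\infty))\cap K_{i-1}=\varnothing$ it is a proper map.

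For the gluing: the normal bundle of $\gamma([0,\infty))$ is a rank-$(n-1)$ vector bundle over a contractible base, hence trivial. After extending $\gamma$ slightly past $0$, the tubular neighborhood theorem gives an open embedding $\Psi\colon(-1,\infty)\times B\hookrightarrow\man{M}$ with $\Psi(t,0)=\gamma(t)$. Fix a diffeomorphism $h\colon(-1,\infty)\to(-1,1)$ with $h(i)\to 1$, set $V=(-1,2)\times B$, and define
\[
  \man{M}_\phi=(\man{M}\sqcup V)/{\sim},\qquad \Psi(t,v)\sim(h(t),v).
\]
The identification glues overlapping open sets by a diffeomorphism, so $\man{M}_\phi$ inherits a smooth, Hausdorff, paracompact, connected $n$-manifold structure, and the natural map $\phi\colon\man{M}\hookrightarrow\man{M}_\phi$ is an open embedding. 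Since $V\setminus\phi(\man{M})=[1,2)\times B$ has interior $(1,2)\times B$, the topological boundary $\partial\phi(\man{M})=\{1\}\times B$ is diffeomorphic to the open $(n-1)$-ball, and $\phi(x_i)=(h(i),0)\to(1,0)\in\partial\phi(\man{M})$.

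The hard part will be arranging the arcs $\beta_i$ to be simultaneously smoothly joined, pairwise disjoint away from shared endpoints, and pushed into ever larger complements of $K_{i-1}$ to force properness. The perturbation off the previously built arcs uses standard transversality in codimension $\geq 1$, but a subtle issue is that $x_i$ and $x_{i+1}$ must lie in a common connected component of $\man{M}\setminus(K_{i-1}\cup\text{prior arcs})$; a sequence that escapes along multiple ends of $\man{M}$ therefore requires a further reindexing in the subsequence-extraction step to ensure this, after which the induction proceeds as above.
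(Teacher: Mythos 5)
There is a genuine gap, and it lies in the very first reduction. You write that ``each $K_j$ contains only finitely many $x_i$, so I can pass to a subsequence satisfying $x_i\notin K_i$,'' and again at the end that a ``further reindexing in the subsequence-extraction step'' handles a sequence that escapes along multiple ends. But the theorem asserts that the \emph{entire} sequence $(\phi(x_i))_{i\in\N}$ converges to $y$, and convergence of a subsequence does not imply convergence of the full sequence. A reordering would be harmless (the set of indices landing outside a given neighbourhood is unchanged by a bijective relabelling), but a subsequence discards points, and the discarded points have no reason to enter the tube that defines the neighbourhoods of $y$. Concretely, take $\man{M}=\R^2$ and $x_i=(i,0)$ for $i$ even, $x_i=(0,i)$ for $i$ odd: your construction threads a proper arc through (say) the even-indexed points along the positive $x$-axis, glues in a ball at the end of that tube, and then $(\phi(x_{2k+1}))$ marches off along the $y$-axis and does not converge to $y$. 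This is exactly the difficulty the paper's proof is organised around: rather than taking a fixed compact exhaustion and subsequencing, it builds an exhaustion $V_0\subset V_1\subset\cdots$ \emph{adapted to the sequence}, with $\overline{V}_{i-1}\subset V_i$ and $x_i\in V_i\setminus\overline{V}_{i-1}$, and then shows each annulus $V_{i+1}\setminus\overline{V}_{i-1}$ is connected, so the arc $\lambda_i$ from $x_i$ to $x_{i+1}$ can live there; arcs joining points that escape along different ends are allowed to pass back through the ``middle'' of $\man M$ (but stay out of $\overline{V}_{i-1}$, which is enough for properness since the $V_i$ exhaust $\man M$). That extra bookkeeping is precisely what lets $\lambda$ hit \emph{every} $x_i$ in order rather than a cofinal subset.

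A secondary point worth flagging: the sentence ``The identification glues overlapping open sets by a diffeomorphism, so $\man{M}_\phi$ inherits a smooth, Hausdorff, paracompact, connected $n$-manifold structure'' is too quick on Hausdorffness. Gluing two manifolds along open sets by a diffeomorphism yields a smooth locally Euclidean space, but not automatically a Hausdorff one (the line with two origins is the standard counterexample). One must check that the newly adjoined points $\{1\}\times B$ are separated from every $p\in\man M$, and this uses the properness of the tube $\Psi$ -- not just of the core curve $\gamma$ -- i.e.\ that $\Psi\bigl([T,\infty)\times B\bigr)$ eventually misses any fixed compact set, which requires shrinking the tube's width as $t\to\infty$. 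The paper handles this by noting that each $p$ lies in some $V_i$, that $\lambda$ eventually leaves and never returns to $\overline{V_i}$, and that the normal-neighbourhood radius $f(t)$ can be chosen so the tube respects this. Your proof should make the analogous shrinking explicit before asserting Hausdorffness.
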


  The proof of this result proceeds in three steps.  The first step is to
  construct a non-self-intersecting curve, $\lambda$ in $\man{M}$ so that
  $(x_i)\subset\lambda$.  The second step is to construct a
  coordinate system about this curve. In the third step we use this
  coordinate system to construct an open embedding with the required properties.

  \begin{proof}
    Let $h$ be a complete Riemannian metric on $\man{M}$. In order to
    construct a non-self-intersecting curve $\lambda$ so that
    $(x_i)\subset\lambda$ we need to construct a cover of
    $\man{M}$ by a collection $V_i$ of connected open
    submanifolds of $\man{M}$ with compact closure so that
    for all $i>0$,
    \[
      \overline{V}_{i-1}\subset V_{i}\quad\text{ and }\quad
      x_i\in V_i\backslash\overline{V}_{i-1}.
    \]
    We shall do this inductively. 
    
    Before we give the induction we define a sequence $(r_i)_{i\in\mathbb{N}}$
    in $\R^+$.
    Given $p\in\mathcal{M}$ and $r\in\mathbb{R}^+$ let
    $B_{h,r}(p)$ be the open ball of radius $r$ centred on $p$
    with respect to the complete distance, $d_h$, induced by $h$.
    For each $i\in\N$ define
    \[
      r_i=\frac{1}{2}\min_{k>i}\left\{d_h(x_k, x_0)\right\}>0.
    \]
    Since $h$ is complete and $(x_i)$ has no accumulation points,
    $\lim_{i\to\infty}r_i=\infty$. In particular,
    for each $i\in\N$, $k>i$ implies that $x_k\not\in B_{h,r_i}(x_0)$ and
    $\bigcup_iB_{h,r_i}(x_0)=\man{M}$.

    For the base case of the induction we define $V_0,V_1\subset \man{M}$ so
    that $\overline{V}_0\subset V_1$ and 
    $x_1\in V_1\setminus\overline{V}_0$. Figure \ref{figure_1}
    provides an illustration of the construction.
    Let $V_0= B_{h,r_0}(x_0)$.
    Note that
    for all $k>0$, $x_k\not\in \overline{V}_0$.
    Let $\epsilon_1<\frac{1}{2}\min_{k>1}\{d_h(x_k,\overline{V}_{0})\}$
    and define 
    $V_1^{\epsilon_1}=\{x\in\man{M}:d_h(x,\overline{V}_{0})<\epsilon_1\}$.
    Note that since $\overline{V}_0$ is connected and compact, and
    $h$ is complete, $V_1^{\epsilon_1}\setminus\overline{V}_0$ is path
    connected.
    By construction, for all $k>1$, 
    $x_k\not\in V_1^{\epsilon_1}\cup B_{h,r_1}(x_0)$.
    Choose a curve $c_1:[0,1]\to\man{M}$ so that 
    $c_1(0)\in V_1^{\epsilon_1}\setminus\overline{V}_0$, $c_1(0)\neq x_1$,
    $c_1(1)=x_1$ and
    $c_1((0,1))\cap\left(\{x_k:k>1\}\cup\overline{V}_0\right)=\EmptySet$.
    Let
    \[
      \zeta_1=\frac{1}{2}\min\left\{\min_{k>1}\{d_h(x_k,c_1)\},
        d(c_1, \overline{V}_0)\right\}
    \]
    and
    \[
      N(c_1,\zeta_1)=\{x\in\man{M}:d_h(x,c_1)<\zeta_1\}.
    \]
    Note that $N(c_1, \zeta_1)\cap\overline{V}_0=\EmptySet$.
    Define $V_1=V_1^{\epsilon_1}\cup B_{h,r_1}(x_0)\cup
    N(c_1,\zeta_1)$.

    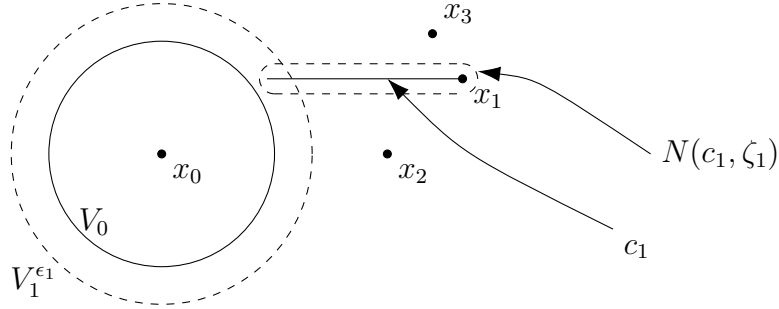
\begin{figure}
      \centering
      \begin{tikzpicture}
        \filldraw (0, 0) node[below right](0) {$x_0$} circle (1.5pt);
        \filldraw (3, 0) node[below right](2) {$x_2$} circle (1.5pt);
        \filldraw (4, 1) node[below right](1) {$x_1$} circle (1.5pt);
        \filldraw (3.6, 1.6) node[above right](3) {$x_3$} circle (1.5pt);

        \node at (-0.9,-0.9) {$V_0$};
        \node at (-1.7,-1.7) {$V_1^{\epsilon_1}$};

        \draw[-{Latex[length=3mm,width=2mm]}] (6,-1) node[below right] {$c_1$} .. controls (4,0) .. (3,1);
        \draw[-{Latex[length=3mm,width=2mm]}] (6.5,0) node[right] {$N(c_1,\zeta_1)$} .. controls (5,1) .. (4.2,1.1);

        \draw (0,0) circle (1.5cm);
        \draw[dashed] (0,0) circle (2cm);
        \draw (1.4,1) -- (4,1);
        \draw[dashed] (1.5,1.2) -- (4,1.2);
        \draw[dashed] (1.5,0.8) -- (4,0.8);
        \draw[dashed] (1.5,0.8) arc(270:90:0.2);
        \draw[dashed] (4,1.2) arc(270:90:-0.2);
      \end{tikzpicture}
      \caption{An illustrative diagram of the construction of $V_1$ given
      $V_0$. The solid circle is $V_0$. The dashed circle is $V_1^{\epsilon_1}$.
      The straight line is $c_1$ and the dashed rectangle, with curved ends,
      is $N(c_1,\zeta_1)$. We have not included $B_{h,r_1}(x_0)$ since,
      for this particular example,
      in the construction of $V_1$, 
      $B_{h,r_1}(x_0)=V_0$.
      Here $V_1=V_1^{\epsilon_1}\cup N(c_1,\zeta_1)$.}
      \label{figure_1}
    \end{figure}

    Since $V_1$ is the union of open subsets of $\man{M}$ it is an open
    submanifold. By construction, each subset $V_1^{\epsilon_1}, B_{h,r_1}(x_0)$
    and $N(c_1,\zeta_1)$ is connected. Since 
    $x_0\in V_1^{\epsilon_1}\cap B_{h,r_1}(x_0)$, and
    $c_1(0)\in N(c_1,\zeta_1)\cap V_1^{\epsilon_1}$,
    $V_1$ itself is connected.
    Since 
    each of $V_1^{\epsilon_1}, B_{h,r_1}(x_0)$
    and $N(c_1,\zeta_1)$ has compact closure, $V_1$ has compact closure.
    By definition $\overline{V}_0\subset V_1^{\epsilon_1}\subset V_1$.
    By construction $x_1=c_1(1)\in N(c_1,\zeta_1)\subset V_1$ and by construction
    $x_1\not\in \overline{V}_0$. Hence
    $x_1\in V_1\setminus\overline{V}_0$. 
    Let $k>1$. By construction $d_h\left(x_k,\overline{V}_0\right)>\epsilon_1$
    so that $x_k\not\in \overline{V}_{1}^{\,\epsilon_1}$. Similarly,
    $d_h\left(x_k,c_1\right)>\zeta_1$ so that $x_k\not\in \overline{N}(c_1,\zeta_1)$.
    Since $d_h\left(x_k,x_0\right)>r_1$ it is also the case that
    $x_k\not\in \overline{B}_{h,r_1}(x_0)$. Thus $x_k\not\in \overline{V}_1$.

    We now handle the inductive case. 
    Let $j\in\N$, $j\neq 0$.
    The argument is the same as for the base case.
    Suppose that $V_0,\ldots, V_{i-1}$ exist and are such that
    for all $j=1,\ldots,i-1$ 
    $\overline{V}_{j-1}\subset V_j$, $x_j\in V_j\setminus\overline{V}_{j-1}$
    and for all $k>j$, $x_k\not\in \overline{V}_j$.
    Let $\epsilon_i<\frac{1}{2}\min_{k>i}\{d_h(x_k,\overline{V}_{i-1})\}$.
    Define
    $V_i^{\epsilon_i}=\{x\in\man{M}:d_h(x,\overline{V}_{i-1})<\epsilon_i\}$.  
    Note that since $\overline{V}_{i-1}$ is connected and compact, and
    $h$ is complete, $V_i^{\epsilon_i}\setminus\overline{V}_{i-1}$ is path
    connected.
    By
    construction, for all $k>i$, $x_k\not\in V_i^{\epsilon_i}\cup
    B_{h,r_i}(x_0)$.
    Choose a curve $c_i:[0,1]\to\man{M}$ so that $c_i(0)\in
    V_i^{\epsilon_i}\setminus\overline{V}_{i-1}$, $c_i(0)\neq x_i$, $c_i(1)=x_i$ and
    $c_i((0,1))\cap\left(\{x_k:k>i\}\cup\overline{V}_{i-1}\right)=\EmptySet$. Let
    \[
      \zeta_i=\frac{1}{2}\min\left\{\min_{k>i}\{d_h(x_k,c_i)\},
        d(c_i,\overline{V}_{i-1})\right\}
    \]
    and
    \[
      N(c_i,\zeta_i)=\{x\in\man{M}:d_h(x,c_i)<\zeta_i\}.
    \]
    Note that $N(c_i,\zeta_i)\cap\overline{V}_{i-1}=\EmptySet$.
    Define $V_i=V_i^{\epsilon_i}\,\cup\,B_{h,r_i}(x_0)\,\cup\,
    N(c_i,\zeta_i)$.
    To complete the proof we need to prove that $V_i$ is a connected open
    submanifold of $\man{M}$ 
    with compact closure so that $\overline{V}_{i-1}\subset V_i$,
    $x_i\in V_i\setminus \overline{V}_{i-1}$ 
    and that for all $k>i$, $x_k\not\in \overline{V}_i$.

    Since $V_i$ is the union of open subsets of $\man{M}$ it is an open
    submanifold. By construction,
    each subset $V_i^{\epsilon_i}, B_{h,r_i}(x_0)$
    and $N(c_i,\zeta_i)$ is connected. Since 
    $x_0\in V_i^{\epsilon_i}\cap B_{h,r_i}(x_0)$, and 
    $c_i(0)\in N(c_i,\zeta_i)\cap V_i^{\epsilon_i}$,
    $V_i$ itself is connected.
    Since 
    each of $V_i^{\epsilon_i}, B_{h,r_i}(x_0)$
    and $N(c_i,\zeta_i)$ has compact closure, $V_i$ has compact closure.
    By definition $\overline{V}_{i-1}\subset V_i^{\epsilon_i}\subset V_i$.
    By construction $x_i=c_i(1)\in N(c_i,\zeta_i)\subset V_i$ and by assumption
    for all $j=0,\ldots,i-1$, $x_i\not\in \overline{V}_j$. Hence
    $x_i\in V_i\setminus\overline{V}_{i-1}$. 
    Let $k>i$. By construction $d_h\left(x_k,\overline{V}_{i-1}\right)>\epsilon_i$
    so that $x_k\not\in \overline{V}_{i}^{\,\epsilon_i}$. Similarly,
    $d_h\left(x_k,c_i\right)>\zeta_i$ so that $x_k\not\in \overline{N}(c_i,\zeta_i)$.
    Since $d_h\left(x_k,x_0\right)>r_i$ it is also the case that
    $x_k\not\in \overline{B}_{h,r_i}(x_0)$. Thus $x_k\not\in \overline{V}_i$.

    From above, $\left\{B_{h,r_i}(x_0):i\in\mathbb{N}\right\}$ is a cover of
    $\man{M}$. Since, for each $i\in\mathbb{N}$, 
    $B_{h,r_i}(x_0)\subset V_i$ we know that
    $\left\{V_i:i\in\mathbb{N}\right\}$ is also a cover of $\man{M}$.
    
    Now, let $\lambda_0$ be a non-self-intersecting 
    curve in $V_1$ which joins $x_0$ to
    $x_1$. For each $i>0$, $x_i\in V_{i}\setminus \overline{V}_{i-1}$.
    Thus, for each $i>0$, $x_i, x_{i+1}\in V_{i+1}\setminus\overline{V}_{i-1}$.
    We seek to show that each $V_{i+1}\setminus \overline{V}_{i-1}$ is connected
    as this will allow us to find a suitable curve from $x_i$ to $x_{i+1}$.

    We begin by showing that $V_i\setminus\overline{V}_{i-1}$ is
    connected, where 
    $V_i\setminus \overline{V}_{i-1}
      = V_i^{\epsilon_i}\setminus\overline{V}_{i-1}\cup 
        B_{h,r_i}(x_0)\setminus\overline{V}_{i-1}\cup
        N(c_i,\zeta_i)\setminus\overline{V}_{i-1}$.
    As mentioned above,
    $V_{i}^{\epsilon_i}\setminus\overline{V}_{i-1}$ is connected.
    By construction, for each $i>0$, 
    $N(c_i, \zeta_i)\cap\overline{V}_{i-1}=\EmptySet$
    and 
    $c_i(0)\in V_i^{\epsilon_i}\setminus \overline{V}_{i-1}\cap N(c_i,\zeta_i)$.
    Thus,
    $\left(V_i^{\epsilon_i}\cup N(c_i,\zeta_i)\right)\setminus\overline{V}_{i-1}$
    is connected.
    If $B_{h,r_i}(x_0)\subset V_{i-1}$, then 
    $V_i\setminus\overline{V}_{i-1}=
      \left(
        V_i^{\epsilon_i}\cup N(c_i, \zeta_i)
      \right)\setminus\overline{V}_{i-1}$
    which is connected.
    Otherwise, there exists $y_1\in B_{h,r_i}(x_0)\setminus \overline{V}_{i-1}$.
    Let $\gamma_1$ be a non-self-intersecting curve in $B_{h, r_i}(x_0)$
    which joins $x_0$ to $y_1$.
    There must exist a point 
    $p_1\in\gamma_1\setminus\overline{V}_{i-1} \cap 
      V_i^{\epsilon_i}\setminus\overline{V}_{i-1}$.
    Since $y_1$ is path connected to $p_1$ in 
    $B_{h,r_i}(x_0)\setminus\overline{V}_{i-1}$, $y_1$ is path connected
    in $V_i\setminus\overline{V}_{i-1}$ to any point in
    $\left(V_i^{\epsilon_i}\cup N(c_i,\zeta_i)\right)\setminus\overline{V}_{i-1}$.
    Let $y_2\in B_{h,r_i}(x_0)\setminus\overline{V}_{i-1}$ where
    $y_1$ and $y_2$ are distinct points. Let
    $\gamma_2$ be a non-self-intersecting curve in $B_{h,r_i}(x_0)$
    which joins $x_0$ to $y_2$, where $\gamma_1\cap\gamma_2=\EmptySet$.
    There must exist a point 
    $p_2\in\gamma_2\setminus\overline{V}_{i-1}\cap
      V_i^{\epsilon_i}\setminus\overline{V}_{i-1}$. Since
    $y_2$ is path connected to $p_2$ in $B_{h,r_i}(x_0)\setminus\overline{V}_{i-1}$,
    $y_2$ is path connected in $V_i\setminus\overline{V}_{i-1}$ to
    $p_1$ and $y_1$. Thus $V_i\setminus\overline{V}_{i-1}$ is connected.

    We now show that $V_{i+1}\setminus \overline{V}_{i-1}$ is connected.
    The paragraph above implies that $V_{i+1}\setminus\overline{V}_i$ and
    $V_i\setminus\overline{V}_{i-1}$ are connected. Consider a point
    $p\in V_{i+1}^{\epsilon_{i+1}}\setminus\overline{V}_i$. Since
    $V_{i+1}^{\epsilon_{i+1}}
      =\left\{x \in \man{M}:d_h(x,\overline{V}_i)<\epsilon_{i+1}\right\}$
    and $V_{i+1}^{\epsilon_{i+1}}\setminus\overline{V}_i$ is connected, there
    exists a path in $V_{i+1}^{\epsilon_{i+1}}\setminus\overline{V}_i$
    which connects $p$ to a point 
    $q\in\partial V_i\setminus\overline{V}_{i-1}\cap
      \partial V_{i+1}\setminus\overline{V}_i$.
    Due to the connectedness of $V_i\setminus\overline{V}_{i-1}$,
    $q$ and thus $p$ are path connected in $V_{i+1}\setminus\overline{V}_{i-1}$
    to every point in $V_i\setminus\overline{V}_{i-1}$ and
    $\partial V_i\setminus\overline{V}_{i-1}\cap
      \partial V_{i+1}\setminus\overline{V}_i$.
    This implies that $V_{i+1}\setminus \overline{V}_{i-1}$ is connected.

    We now introduce the required curve from $x_{i}$ to $x_{i+1}$.
    Since $V_{i+1}\setminus\overline{V}_{i-1}$ is connected
    there exists a non-self-intersecting curve, $\lambda_i$, in
    $V_{i+1}\setminus\overline{V}_{i-1}$ that joins $x_i$ and $x_{i+1}$.
    We may do this in such a way that
    $\lambda_i\cap\lambda_{i-1}=x_i$. By
    `smoothing' at the joins we get a $C^\infty$ curve
    $\lambda:[0,1)\to\man{M}$ which is non-self-intersecting, such that
    $(x_i)\subset \lambda$. Note that $\lambda$ may need to be
    rescaled so that $\lambda'(t)\neq0$ for $t\in[0,1)$.
    Let $(t_i)_{i\in\mathbb{N}}$ be a sequence in
    $[0,1)$ so that $t_0=0$, and for all $i,j\in\mathbb{N}$, $\lambda(t_i)=x_i$,
    $t_i<t_j \Leftrightarrow i<j$, and $t_i\to 1$ as $i\to\infty$.
    By construction, for each $i\in\N$, 
    $\lambda$ eventually leaves, and never returns to,
    $\overline{V_i}$. Since $\{\overline{V_i}:i\in\N\}$ is a compact
    exhaustion of $\man{M}$ we know that $d_h(x_0,\lambda(t))\to\infty$ as
    $t\to 1$. This implies that the non-self-intersecting curve $\lambda$
    has no accumulation points and therefore approaches the ``edge'' of the
    manifold as $t\to 1$.

    Let $N\lambda=\{v\in T_{\lambda(t))}\man{M}:h(v,\lambda'(t))=0, t\in [0,1)\}$
    be the normal bundle of $\lambda$. It is a sub-bundle of
    $T\man{M}$, hence we can restrict $h$ to $N\lambda$. The restriction of
    the metric $h$ induces a Levi-Civita connection
    on $N\lambda$.
    Let
    $\{E_i\}_{i=1,\ldots,n-1}$ be a linearly independent
    frame at $N_{\lambda(0)}\lambda$.
    Parallelly propagate, with respect to the induced Levi-Civita connection,
    this frame along $\lambda$ in $N\lambda$ to get $n-1$
    linearly independent
    vector fields $E_i:[0,1)\to N\lambda$.
    The bundle $N\lambda$ is an inclusion into $TM$.
    In an abuse of notation denote the image of $E_i$ under this
    inclusion by $E_i$. By construction
    for all $i=1,\ldots, n-1$, $h(E_i, \lambda')=0$.

    Let $B_{n-1}$ be the unit ball in $\R^{n-1}$. The existence of
    a normal neighbourhood of $\lambda$, \cite[Proposition 7.26]{ONeil1983},
    implies that there exists a smooth function $f:(0,1)\to\mathbb{R}^+$
    so that the function
    $\mu:(0,1)\times B_{n-1}\to\man{M}$ defined by
    $\mu(t,p)=\exp_{\lambda(t)}(f(t)p^i E_i(t))$, where
    $i$ sums over $1$ to $n-1$, is a chart.

    Now we use the chart $\mu$ to construct an open embedding of \man{M}. Let
    $\man{N}$ be the set $\man M\cup\left([1,2)\times B_{n-1}\right)$. Give
    $\man{N}$ the atlas generated by the set containing all the charts of $\man{M}$ 
    as well
    as a new chart $\psi:(0,2)\times B_{n-1}\to \man{N}$ given by
    \begin{gather}
      \psi(t,p)=\left\{
      \begin{aligned}
        &\exp_{\lambda(t)}(f(t)p^i E_i(t)) \quad\quad\text{if}\ t\in(0,1)\\
        &(t,p) \quad\quad\quad\quad\quad\quad\quad\quad\ \text{otherwise.}
      \end{aligned}\right.
    \end{gather}

    The only non-trivial part is to show that $\man N$ is a Hausdorff manifold. 
    This reduces to showing that the
    points $\{1\}\times B_{n-1}$ are Hausdorff separated from
    \man{M}. This must be true, however, since for any
    $p\in\man{M}$ there exists $i$ so that $p\in V_{i}$, 
    and $\man{N}\setminus\overline{V_i}$ is an open neighbourhood containing every
    point in $\{1\}\times B_{n-1}$. Thus $\man{N}$ is a smooth, connected,
    Hausdorff, paracompact manifold of dimension $n$.
    By definition $\partial\man{M}$ considered
    as a subset of $\man{N}$ is $\{1\}\times B_{n-1}$ which is diffeomorphic
    to the $n-1$ dimensional unit ball.

    Let $\phi:\man{M}\to\man{N}$ be the identity. Then $\phi$ is
    an open embedding of $\man{M}$ and the sequence $(\phi(x_i))_{i\in\N}$
    converges to the point $(1,0)\in [1,2)\times B_{n-1}$ of
    $\man{N}$, as required.
  \end{proof}

  \begin{Cor}
    Let $\man{M}$ and $\man{M}_\phi$
    be smooth, connected, Hausdorff, paracompact manifolds of dimension $n$. 
    Let
    $\gamma:[a,b)\to\man{M}$ be a non-self-intersecting curve in
    $\man{M}$
    without limit points in
    $\man{M}$. Then there exists an open embedding
    $\phi:\man{M}\to\man{M}_\phi$ so that the curve $\phi(\gamma)$
    has an endpoint in $\bound{\phi(\man{M})}$.
  \end{Cor}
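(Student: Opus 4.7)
The plan is to reduce the Corollary to Theorem \ref{Thm:Endpoint Theorem} by running the construction from its proof with $\gamma$ itself in place of the auxiliary curve $\lambda$. First, fix any strictly increasing sequence $(t_i)_{i\in\N}\subset[a,b)$ with $t_i\to b$, and set $x_i=\gamma(t_i)$. Because $\gamma$ has no limit points in $\man{M}$, the sequence $(x_i)$ has no accumulation points in $\man{M}$: any accumulation point of $(x_i)$ would, by continuity of $\gamma$, be a limit point of $\gamma$. The same hypothesis yields the stronger fact that $\gamma$ eventually leaves, and never returns to, any compact $K\subset\man{M}$; otherwise a sequence $\gamma(s_k)\in K$ with $s_k\to b$ would, by compactness of $K$, have a convergent subsequence, producing a limit point.

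Feed $(x_i)$ into the exhaustion construction from the proof of Theorem \ref{Thm:Endpoint Theorem} to obtain the compact sets $\{V_i\}$ with $\overline{V}_{i-1}\subset V_i$. Skip the step of assembling a curve $\lambda$ out of arcs $\lambda_i\subset V_{i+1}\setminus\overline{V}_{i-1}$; instead set $\lambda:=\gamma$ directly, after a smooth monotone reparameterisation of $[a,b)$ onto $[0,1)$ and a rescaling ensuring $\lambda'\neq 0$. Every property of $\lambda$ needed for the remainder of the theorem's proof is already in hand: $\lambda$ is smooth and non-self-intersecting, and by the previous paragraph it eventually leaves every $\overline{V_i}$. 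Rerun the rest of the argument verbatim --- form the normal bundle $N\lambda$, parallel-propagate a frame $\{E_i\}$, apply the normal-neighbourhood result to obtain a chart $\mu:(0,1)\times B_{n-1}\to\man{M}$ with $\mu(t,0)=\lambda(t)=\gamma(t)$, attach $[1,2)\times B_{n-1}$, and extend $\mu$ to $\psi$ to obtain the enlarged manifold $\man{N}$.

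Hausdorff separation of the new boundary points $\{1\}\times B_{n-1}$ from $\man{M}$ is established exactly as in the theorem: any $p\in\man{M}$ lies in some $V_i$, and $\man{N}\setminus\overline{V_i}$ is then an open neighbourhood of $\{1\}\times B_{n-1}$ avoiding $p$. The identity map $\phi:\man{M}\to\man{N}$ is the required open embedding, and by construction $\phi(\gamma(t))=\psi(t,0)\to(1,0)\in\bound{\phi(\man{M})}$ as $t\to 1^-$ (equivalently, as the original parameter approaches $b$), so $\phi(\gamma)$ has the required endpoint. The only real point requiring care is verifying that $\gamma$ can legitimately be substituted for $\lambda$ in the normal-bundle step --- that it can be smoothly and regularly reparameterised --- which is the main (and rather mild) obstacle; if $\gamma$ is assumed to be smooth and regular from the outset, as is standard in this setting, then nothing further is needed, and otherwise a preliminary Whitney-style smoothing that preserves both the non-self-intersection and divergence properties of $\gamma$ suffices.
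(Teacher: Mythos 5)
Your proposal is correct and follows exactly the route the paper takes: the paper's own proof of the Corollary is a two-line remark that one may ``use $\gamma$ as the curve in the proof of the Endpoint Theorem,'' which is precisely the substitution you carry out. You simply spell out the details the paper leaves implicit --- extracting the sequence $x_i=\gamma(t_i)$, building the exhaustion $\{V_i\}$ from it, verifying $\gamma$ eventually leaves every compact set, and flagging the mild smoothness/regularity hypothesis on $\gamma$ needed for the normal-bundle step.
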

  \begin{proof}
    Since $\gamma$ is non-self-intersecting, we may use it as the
    curve in the proof of the Endpoint Theorem.  The construction
    of $\phi$ then implies that $\phi(\gamma)\to(1,0)\in\man{N}$.
  \end{proof}

\ack
  The authors give their thanks to Christopher J.\ S.\ Clarke who worked with 
  Susan M. Scott
  on the original development of this result and sadly passed away on
  16 April 2019. As a student, Scott was inspired by the work of Clarke on singularities,
  and they subsequently collaborated in this field a number of times, including for the
  production of the Endpoint Theorem presented here.
  
\section{Bibliography}

 \bibliographystyle{unsrt}
 \bibliography{Thesis-main}

\end{document}